\newtheorem{theorem}{Theorem}
\newtheorem{definition}[theorem]{Definition}
\newtheorem{lemma}[theorem]{Lemma}
\newcommand{\ghz}{|\mathrm{GHZ}\rangle}
\newcommand{\supp}{\operatorname{supp}}
\newcommand{\meane}[1]{\lceil  #1 \rceil}
\newcommand{\expct}[1]{\langle  #1 \rangle}
\newcommand{\floor}[1]{\lfloor  #1 \rfloor}
\newcommand{\tr}{{\mathrm{Tr}}}
\newcommand{\id}{{\mathds{1}}}
\newcommand{\re}{\operatorname{Re}}
\newcommand{\var}{\operatorname{Var}}
\begin{document}
\title{
Quantum LOSR networks cannot generate graph states with high fidelity
}
\author{Yi-Xuan Wang}
\affiliation{School of Physics and Optoelectronics Engineering, Anhui University, 230601 Hefei, China}
\affiliation{Naturwissenschaftlich-Technische Fakult\"at, Universit\"at Siegen, Walter-Flex-Stra{\ss}e 3, 57068 Siegen, Germany}
\affiliation{Department of Modern Physics, University of Science and Technology of China, Hefei 230026, China}
\affiliation{Institute of Theoretical Physics and IQST, Albert-Einstein Allee 11, Ulm University, 89081 Ulm, Germany}
\author{Zhen-Peng Xu}
\email{zhen-peng.xu@ahu.edu.cn}
\affiliation{School of Physics and Optoelectronics Engineering, Anhui University, 230601 Hefei, China}
\affiliation{Naturwissenschaftlich-Technische Fakult\"at, Universit\"at Siegen, Walter-Flex-Stra{\ss}e 3, 57068 Siegen, Germany}
\author{Otfried G\"{u}hne}
\affiliation{Naturwissenschaftlich-Technische Fakult\"at, Universit\"at Siegen, Walter-Flex-Stra{\ss}e 3, 57068 Siegen, Germany}

\date{\today}

\begin{abstract}
Quantum networks lead to novel notions of locality and correlations and an 
important problem concerns the question of which quantum states can be 
experimentally prepared with a given network structure and devices and which 
not.  We prove that all multi-qubit graph states arising from a connected graph 
cannot originate from any quantum network with bipartite sources, as long as 
feed-forward and quantum memories are not available. Moreover, the fidelity of a 
multi-qubit graph state and any network state cannot exceed $9/10$. Similar 
results can also be established for a large class of multi-qudit graph states. 
\end{abstract}
\maketitle

%%%%%%%%%%%%%%%%%%%%%%%%%%%%%%%%%%%%%%%%%%%%%%%%%%%%%%%%
\textit{Introduction.}---
%%%%%%%%%%%%%%%%%%%%%%%%%%%%%%%%%%%%%%%%%%%%%%%%%%%%%%%%
A central topic in quantum information theory is entanglement 
theory~\cite{horodecki2009quantum}. Since quantum entanglement 
is a valuable resource for quantum metrology~\cite{giovannetti2006quantum}, 
quantum computation~\cite{divincenzo1995quantum,illiano2022quantum}, quantum key distribution~\cite{Bennett1984quantum} and anonymous quantum conference key agreement~\cite{hahn2020anonymous}, bundles of experimental effort have been 
devoted to create more entangled state~\cite{mooney2021generation,pogorelov2021compact,omran2019generation,wang201818}, 
and series of theoretical works have contributed to characterize different types of entanglement~\cite{acin2001classification, vidal2000reversible, dur1999separability}.

A quantum state is said to be entangled if it is not a convex combination 
of product states. To detect the quantum entanglement of two particles,
witnesses~\cite{Horodecki_1996wit,Terhal_2000wit} 
(often related to the fidelity of a target state~\cite{acin2001classification,Weilenmann_2020fied_target,G_hne_2021fied_target}) and the PPT 
criterion~\cite{peres1996separability} are standard methods. 
In the multiparticle case a state can be entangled, but separable 
for a given bipartition, then it is called biseparable~\cite{dur1999separability}. 
If the state cannot be written as a convex combination of any biseparable 
states, it is said to be genuine multiparticle entangled (GME)~\cite{acin2001classification,vidal2000reversible,dur1999separability}.
As for the detection of GME states, the fidelity with a highly entangled state,
such as the Greenberger-Horne-Zeilinger (GHZ) state is one of the most used in experiments~\cite{guhne2010separability,mooney2021generation}. For example, 
an $n$-partite state with a GHZ fidelity exceeding $1/2$ is GME. 
Including GHZ state as a special example, graph states~\cite{audenaert2005entanglement,hein2006entanglement,looi2008quantum} 
play an eminent role in entanglement theory and its applications.

Recently, the concept of genuine multiparticle entanglement has been 
debated \cite{Yamasaki_2022genuine,Palazuelos_2022genuine}, and novel 
notions appropriate for the network scenario have been introduced and studied \cite{navascues2020genuine,luo2021new, hansenne2022symmetries}. In a network,
states can be prepared by distributing particles from multiple 
smaller sources to different parties and applying local channels, 
see Fig.~\ref{fig:TQN} for an example. {In this fundamental
scenario the local operations rely on a globally shared classical variable 
(Local operations and shared randomness, LOSR), e.g., a predefined protocol
with shared randomness. The scenario of local operations assisted by classical 
communication (LOCC) gives more power to create distributed quantum states. But, 
communication based on outcome of local operations requires considerable time, either in the scenario of distributed quantum computation where local operations takes a while, or in the case that nodes in the network are far away from each other. Consequently,
it would also require the usage of quantum 
memories or feed-forward techniques, which are expensive resources for current
quantum technologies. Moreover, device-independent quantum information protocols
are frequently related to Bell scenarios, where communication is impossible due
to the space-like separation.
Overall, an $n$-partite state is called genuine network multipartite entangled (GNME) if it cannot be created via LOSR in the network approach using $(n-1)$-partite sources 
only. 
Besides this, other quantum correlations like quantum nonlocality and quantum steering 
have also been generalized to quantum networks recently~\cite{pozas-kerstjens2019bounding, renou2019genuine, gisin2020constraints, aberg2020semidefinite,tavakoli2021bell, 
kraft2021characterizing, kraft2021quantum, jones2021network}.

%%%%%%%%%%%%%%%%%%%%%%%%%%%%%%%%%%%%%%%%%%%%%%%%%%%%%%%5
\begin{figure}[t]
\includegraphics[width=0.28\textwidth]{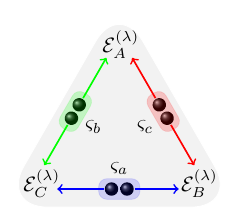}
\caption{Sketch of a triangle quantum network, where three 
bipartite sources $\varsigma_a$, $\varsigma_b$ and $\varsigma_c$ 
are distributed to three parties $A, B, C$, e.g., the source 
states $\varsigma_a$ are sent to $B$ and $C$. Each party can 
apply local operations $\mathcal{E}_X^{(\lambda)}$ ($X=A,B,C$) 
on the received particles, which are affected by global shared 
randomness $\lambda$. 
}
\label{fig:TQN}
\end{figure}  
%%%%%%%%%%%%%%%%%%%%%%%%%%%%%%%%%%%%%%%%%%%%%%%%%%%%%%%%

For three-qubit states in the triangle network, a witness 
derived from the fidelity~\cite{navascues2020genuine, kraft2021quantum} 
and semidefinite programming methods based on the inflation technique~\cite{navascues2020genuine,wolfe2021quantum} can be useful. The disadvantage of
these approaches is that both of them are hard to generalize to more 
complicated quantum networks. An analytical method based on symmetry 
analysis and inflation techniques~\cite{wolfe2019inflation} was proposed 
recently~\cite{hansenne2022symmetries} and can overcome some of the 
difficulties. Explicitly, it was shown there that all $n$-qubit graph 
states with $n \le 12$ are not available in networks with bipartite 
sources,  and it was conjectured that this no-go theorem hold for
multi-qubit graph states with an arbitrary number of particles. 

In this paper 
we prove that all multi-qudit graph states with a connected graph, 
where the multiplicities of the edges are either constant or zero, 
cannot be prepared in any network with only bipartite sources. 
In fact, this result holds also for all the states whose fidelity 
with some of those qudit graph states exceeds a certain value. 
More specifically, our results exclude the generation of any multi-qubit 
graph state with a fidelity larger than $9/10$ in networks. This proves 
the conjecture formulated in Ref.~\cite{hansenne2022symmetries}, it also 
may provide interesting connections to other no-go theorems on the 
preparability of graph states in different physical scenarios, such 
as spin-models with two-body 
interactions~\cite{van2008graph,huber2016characterizing}.

%%%%%%%%%%%%%%%%%%%%%%%%%%%%%%%%%%%%%%%%%%%%%%%%%%%%%%%%%%%%%%
\textit{Network entanglement.}---
%%%%%%%%%%%%%%%%%%%%%%%%%%%%%%%%%%%%%%%%%%%%%%%%%%%%%%%%%%%%
The definition of network entanglement is best explained using 
the example of the triangle scenario, see Fig.~\ref{fig:TQN}. Here, one 
has three bipartite quantum source states $\varsigma_x$ for $x=a, b, c$, 
and three local channels $\mathcal{E}_{X}^{(\lambda )}$ for $X=A, B, C$, 
where $\lambda$ is the shared random variable with probability $p_\lambda$. 
The global state can be prepared in this scenario has the form
\begin{equation}
\varrho =\mathop{\sum}\limits_{\lambda }{p}_{\lambda }\mathcal{E}_{A}^{(\lambda )}\otimes \mathcal{E}_{B}^{(\lambda )}\otimes \mathcal{E}_{C}^{(\lambda )}\left[\varsigma_a\otimes \varsigma_b \otimes \varsigma_c\right].
\label{eq-networkdef}
\end{equation}
For a given state $\varrho$, the question arises whether it can be generated
in this manner, and this question was in detail discussed in Refs.~\cite{navascues2020genuine, luo2021new, hansenne2022symmetries}.

More generally, one can introduce quantum network states as follows: 
A given hypergraph $G(V,E)$, where  $V$ is the set of vertices and  
$E$ is the set of hyperedges (i.e., sets of vertices)  describes a network
where each vertex stands for a party and each hyperedge stands for a source 
which dispatches particles to the parties represented by the vertices in it. 
This quantum network is a correlated quantum network (CQN), if each party can 
apply a local channel depending on a shared random variable. Then, in complete
analogy to the definition in Eq.~(\ref{eq-networkdef}) one can ask whether 
a state can be prepared in this network or not. More properties of these
sets of states (e.g., concerning the convex structure and extremal points)
can be found in Refs.~\cite{navascues2020genuine, hansenne2022symmetries}.

%%%%%%%%%%%%%%%%%%%%%%%%%%%%%%%%%%%%%%%%%%%%%%%%%%%%%%%%5
\textit{Three-qudit GHZ states and the inflation technique.}---
%%%%%%%%%%%%%%%%%%%%%%%%%%%%%%%%%%%%%%%%%%%%%%%%%%%%%%%%%%%%%5
The inflation technique~\cite{wolfe2019inflation} turns out to be an useful tool to study network entanglement~\cite{navascues2020genuine,hansenne2022symmetries}. Unless otherwise stated,  we consider networks with bipartite sources only. For a given network as the triangle network depicted in Fig.~\ref{fig:TQN} and also in Fig.~2(a), the copies of sources are sent to different copies of the parties as in Fig.~2(b) and Fig.~2(c). In principle, the source states may also be wired differently in different kinds of inflation. For convenience, here we use also edges with only one vertex as in Fig.~2(d) and Fig.~2(e), which means that the particles, which have not appeared in the state, are traced out in the source states. 

The key idea of the inflation method is the following: If the three-particle state $\rho$ can be prepared in the network, the six-particle states $\gamma$ and $\eta$ can be prepared in the inflated
networks as well. The states $\gamma$ and $\eta$ share some marginals 
with $\rho$ and with each other. So, if one can prove that six-particle
states with these desired properties do not exist, the state $\rho$
is not reachable in the original network.

\begin{figure}[t]
\includegraphics[width=0.45\textwidth]{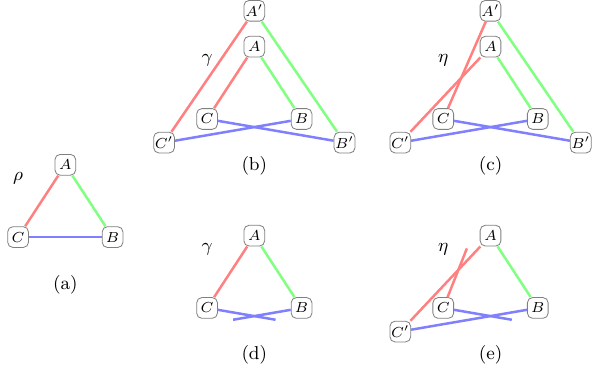}
\caption{The triangle network in (a) and two kinds of its inflation in (b) and (c). For convenience, we use (d) and (e) as short notations of (b) and (c), respectively. Here the broken edge $e$ with only one vertex $v$ means that we ignore or trace out the particles not for the party represented by $v$ from the source state represented by $e$.}
\label{fig:GHZ}
\end{figure}

To see how the idea of inflation works in practice, we 
take the three-qudit GHZ state \cite{greenberger1989going} 
as an example. The three-qudit GHZ state $\ghz = 1/\sqrt{d}\sum_{i=0}^{d-1} |iii\rangle$ is a stabilizer state, {whose stabilizers include}
\begin{align}\label{eq:stablizer_ghz}
    Z_B Z_A^\dagger,\ 
    Z_A Z_C^\dagger,\ 
    Z_B Z_C^\dagger,\ 
    X_A X_B X_C,
\end{align}
where the unitary operators are
$
    Z=\sum_{q=0}^{d-1}\omega^q|q\rangle\langle q|,\ 
    X=\sum_{q=0}^{d-1}|q\oplus 1\rangle\langle q|,
$
with $\oplus$ to be the addition modulo $d$, and $\omega=\exp \frac{2\pi i}{d}$. Note $X^m Z^n=\omega^{-mn}Z^n X^m$.

For a given network state $\rho$, we can consider the value $\meane{M}_\rho$ of any stabilizer $M$ in Eq.~\eqref{eq:stablizer_ghz}, i.e., the expectation value $\langle \Pi_M^{(1)} \rangle_\rho$ with $\Pi_M^{(1)}$ to be the projector into the eigenspace of $M$ with eigenvalue $+1$. If one of $\meane{M}_\rho$ does not equal to $1$, we can conclude that this network state $\rho$ cannot be the state $|\mathrm{GHZ}\rangle\langle \mathrm{GHZ}|$.

Let us consider two kinds of inflation of the triangle-network as in Fig.~\ref{fig:GHZ}, where the corresponding network states are denoted as $\gamma, \eta$. Roughly speaking, the source between $B, C$ is broken in the $\gamma$ inflation. Note that inflation $\eta$ is actually a trivial inflation in triangular network.  By comparing Fig.~\ref{fig:GHZ}(a), Fig.~\ref{fig:GHZ}(b) and Fig.~\ref{fig:GHZ}(c), we have the marginal relations
\begin{align}\label{eq:ghzc1}
    \meane{Z_BZ_A^\dagger}_\gamma = \meane{Z_BZ_A^\dagger}_\rho&,\ 
    \meane{Z_AZ_C^\dagger}_\gamma = \meane{Z_AZ_C^\dagger}_\rho,\\
    \meane{Z_BZ_C^\dagger}_\eta &= \meane{Z_BZ_C^\dagger}_\gamma,\\ 
    \meane{(X_AX_BX_{C^\prime})^{ \floor{\frac{d}{2}} }}_\eta &= \meane{(X_AX_BX_C)^{ \floor{\frac{d}{2}} }}_\rho.
\end{align}
For convenience, here $\meane{Z_BZ_C^\dagger}_\eta$ stands for $\meane{Z_BZ_C^\dagger}_{\eta_{BC}}$, where $\eta_{BC}$ is the reduced state of $\eta$ on parties $B$ and $C$. We use such shorthand notations throughout the whole manuscript without confusion.
By applying Lemma~\ref{lm:product_ineq} in Appendix~A in Supplemental Information(SI),
we have
\begin{align}\label{eq:ghzc21}
    \meane{Z_BZ_C^\dagger}_\gamma \ge \meane{Z_BZ_A^\dagger}_\gamma + \meane{Z_AZ_C^\dagger}_\gamma - 1.
\end{align}
Under the assumption that $\meane{(X_AX_BX_{C'})^{\floor{\frac{d}{2}}}}_\eta \ge 1/2$ and $\meane{Z_BZ_C^\dagger}_\eta \ge 1/2$, Lemma~\ref{lm:fidelity},\ref{lm:eigenvalue},\ref{lm:incompatible} in SI, lead to
\begin{align}\label{eq:ghzc22}
    &|2\meane{(X_AX_BX_{C'})^{\floor{\frac{d}{2}}}}_\eta\!-\!1|^2\!+\! |2\meane{Z_BZ_C^\dagger}_\eta\!-\!1|^2\! \le\! 1\!+\!\sin\theta_d,
\end{align}
where $\theta_d=0$ when $d$ is even and $\theta_d=\frac{\pi}{2d}$ when $d$ is odd. This assumption can be ensured whenever $\mathcal{F}(|\mathrm{GHZ}\rangle\langle \mathrm{GHZ}|, \rho) \ge 3/4$ according to Lemma~\ref{lm:fidelity} in SI.

If $\rho$ is indeed the GHZ state, then Eq.~\eqref{eq:ghzc1}--~\eqref{eq:ghzc22} cannot hold simultaneously, which is a contradiction. Combined with Lemma~\ref{lm:fidelity}, which implies $\meane{(X_AX_BX_{C'})^{\floor{\frac{d}{2}}}}_\eta\ge \mathcal{F}$ and $\meane{Z_BZ_C^\dagger}_\eta\ge 2 \mathcal{F}-1$, Eq~\eqref{eq:ghzc22} can be solved as 
\begin{equation}\label{eq:boundghz}
    \frac{3}{4} \le \mathcal{F}(|\mathrm{GHZ}\rangle\langle \mathrm{GHZ}|, \rho) \le \frac{7+\sqrt{4+5\sin\theta_d}}{10}.
\end{equation}
Hence, either the assumption does not hold, then $\mathcal{F}(|\mathrm{GHZ}\rangle\langle \mathrm{GHZ}|, \rho) < 3/4$; or the assumption holds and so do the inequalities in Eq.~\eqref{eq:boundghz}. Since the upper bound in Eq.~\eqref{eq:boundghz} is always larger than $3/4$ for any dimension $d$, the upper bound in Eq.~\eqref{eq:boundghz} holds whatever the assumption holds or not. 
We remark that we only used GHZ states as an example to 
introduce our general method, a tighter bound in Appendix~D on the fidelity exists \cite{navascues2020genuine}.

%%%%%%%%%%%%%%%%%%%%%%%%%%%%%%%%%%%%%%%%%%%%%%%%%%%%%%%%%%%%%%5
\textit{Multi-qudit graph states.}---
%%%%%%%%%%%%%%%%%%%%%%%%%%%%%%%%%%%%%%%%%%%%%%%%%%%%%%%%%%%%%%
The three-qudit GHZ state is a special case of a multi-qudit graph 
state~\cite{looi2008quantum}. In the same spirit, we can also derive no-go 
theorems for network states with multi-qudit graph states as targets. Each
graph state is associated with a  multigraph $G=(V,E)$, defined by its 
vertex set $V$, and the edge set $E$, where the edge between vertices 
$i,j$ with multiplicity $m_{ij}$ is denoted as $((i,j),m_{ij})$. Without loss of generality, we will only consider multi-partite graph states, i.e. number of parties (vertices) is at least 3. Due to the periodicity as follows, the multiplicity can be limited to $m_{ij}=0,1,\cdots,d-1$, where $d$ is dimension of Hilbert space for a single qudit.
Besides, we denote $N_i$ the neighborhood of vertex $i$, and define 
unitary stabilizers $g_i = X_i\otimes_{j\in N_i} Z^{m_{ij}}$. Note that $g_i^d=I$.
For a given multigraph $G=(V,E)$, the corresponding graph state is the unique common eigenvector of the operators $\{g_i\}_{i\in V}$ with eigenvalue $+1$ \cite{looi2008quantum,bahramgiri2006graph,PhysRevA.71.042315,schlingemann2001stabilizer,bahramgiri2006graph}.
This eigenvector $|G\rangle$ is called the graph state associated to the graph $G$.

\begin{theorem}
\label{ob: qudit constant multiplicity}
Multi-qudit graph states with connected graph and multiplicities either
being constant $m$ or $0$ cannot be prepared in any network with 
bipartite sources.
\end{theorem}

%%%%%%%%%%%%%%%%%%%%%%%%%%%%%%%%%%%%%%%%%%%%%%%%%%%%%%%%%%%%%%%%%%%%
\begin{figure}[t]
    \centering
    \includegraphics[width=0.45\textwidth]{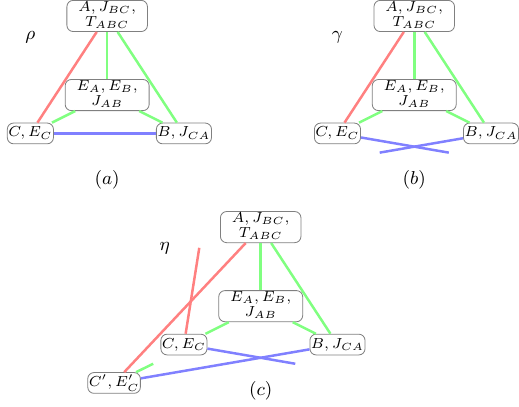}
    \caption{Fully connected network with bipartite sources and two kinds of its inflation, where the corresponding states are denoted 
    as $\rho, \gamma$ and $\eta$. The sources which have not been 
    changed in the whole proof are omitted in the figure.
    }
    \label{fig:network_full}
\end{figure}
%%%%%%%%%%%%%%%%%%%%%%%%%%%%%%%%%%%%%%%%%%%%%%%%%%%%%%%%%%%%%%%%%%%%%

\begin{proof}
Here we explain the main idea, the detailed proof is provided in Appendix~B in SI. 
The structure of the proof is in the same spirit as the one for $\ghz$ state.
The key point is to keep necessary marginal relations in different kinds of 
inflations and finally derive a contradiction. 

For a connected graph $G$ with no less than three vertices, there are 
always three vertices $A, B, C$ such that $(A,B), (A,C)$ are two edges. 
If $(B, C)$ is not an edge, we call $(A,B,C)$ an angle. Otherwise, 
we call $(A,B,C)$ a triangle.

To carry out the proof, we have to carefully group the neighborhoods of 
vertices $A, B, C$ and choose proper stabilizers of the graph state 
correspondingly. Here we consider the case where $(A,B,C)$ is a triangle 
as an example. Then we can partition all the vertices into $4$ groups 
as in Fig.~\ref{fig:network_full}, where $T_{ABC}$ is the common neighborhood 
of $A, B, C$, $J_{AB}$ is the common neighborhood of $A, B$ but not $C$, $E_A$ 
is the neighborhood which is not shared by $B, C$ and so on.

By choosing $S_1 = g_Ag_B^\dagger$, $S_2 = g_A^\dagger g_C$, $S_3 = S_1 S_2 = g_B^\dagger g_C$ and $S_4 = g_C^t$ (the $t$-th power of $g_C$), we have the 
marginal relations
\begin{align}\label{eq:gagb}
        &\meane{S_1}_\gamma = \meane{S_1}_\rho,\     \meane{S_2}_\gamma = \meane{S_2}_\rho,\\
        &\expct{S_3}_\eta = \expct{S_3}_\gamma,\ \expct{S_{4^\prime}}_\eta = \expct{S_4}_\rho,
\end{align}
where $S_{4^\prime}$ is related to $S_4$ by changing party $C^\prime$ in the support to $C$. 
These marginal relations can be verified by comparing the supports of each 
operator in different kinds of inflation. 

However, $S_3, S_{4^\prime}$ do not commute in the inflation $\eta$. More precisely,
we have that 
\begin{equation}\label{eq:angled}
    |\expct{S_3}_\eta|^2 + |\expct{S_{4^\prime}}_\eta|^2 \le 1 + \sin\theta_{t,d},
\end{equation}
where $0\le\theta_{t,d}\le\pi/6$ by choosing $t$ properly.

Similarly as the analysis for the GHZ state, the relation that $S_3 = S_1S_2$ and conditions in Eqs.~(\ref{eq:gagb} - \ref{eq:angled}) lead to 
\begin{equation}\label{eq:ghnetfidelity}
\mathcal{F}(|G\rangle\langle G|, \rho) \le \frac{7+\sqrt{4+5\sin(\theta_{t,d})}}{10}<0.95495.
\end{equation}
This is in fact a universal bound for arbitrary configuration of 
equal-multiplicity multi-qudit graph states.
\end{proof}

Note that for qubit graph states, the multiplicity is either $1$ or $0$, this leads to the following theorem.
\begin{theorem}\label{ob:qubit_fidelity}
Any multi-qubit graph state with connected graph cannot be prepared in a network with only 
bipartite sources, with $9/10$ as an upper bound of fidelity between 
graph state and network state. This follows from the fact that 
$\theta_{t,2} = 0$.
\end{theorem}

In order to formulate a more general statement, note that the key ingredients  in the proof of Theorem~\ref{ob: qudit constant multiplicity}, were that all parties can be grouped in a special 
way which fits to the algebraic relations $S_3 = S_1 S_2$ for
commuting $S_1, S_2$, moreover, it was needed that $S_3, S_{4'}$ 
have no common eigenvectors with eigenvalue $+1$. This leads to a
a more general theorem for the states with a set of stabilizers.

\begin{theorem}\label{ob:general}
For a given {pure} state $\sigma$ with {commuting} (unitary or projection) stabilizers $\{S_1, S_2, S_3=S_1 S_2, S_4\}$, it cannot be prepared in bipartite network if 
\begin{enumerate}
    \item all the parties can be grouped into $\{G_i\}_{i=1}^4$ such that $S_i$ has no support in $G_i$, see also Fig.~\ref{fig:general_operator}(a);
    \item {$S_1$, $S_2$ commutes, and} $S_3, S_{4'}$ have no common eigenvectors with eigenvalue $1$, where $S_{4'}$ has the support $G_2, G_3$ and a copy of $G_1$, and $S_{4'}$ acts there same as $S_4$ on $\supp{(S_4)}$.
\end{enumerate}
\end{theorem}

%%%%%%%%%%%%%%%%%%%%%%%%%%%%%%%%%%%%%%%%%%%%%%%%%%%%%%%%5
\begin{figure}[t]
\includegraphics[width=0.45\textwidth]{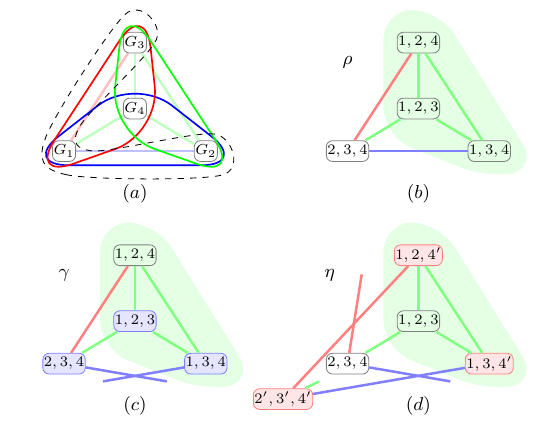}
    \caption{General inflation scheme in bipartite network similar to Fig.~\ref{fig:network_full}. In (a), $G_i$'s are group of parties in a partition, the green, red, blue and dashed circles stand for the supports for operators $S_1, S_2, S_3, S_4$, respectively. In (b), (c) and (d), we replace the label of each group by the indices of the $S_i$'s which has this group as support. The green shadow represents a multipartite source relating to all the groups in it. The sources which have not been changed in the whole proof are omitted in the figures.}
    \label{fig:general_operator}
\end{figure}
%%%%%%%%%%%%%%%%%%%%%%%%%%%%%%%%%%%%%%%%%%%%%%%%%%%%%%%%%%%%%

\begin{proof}
Here we provide the main steps of the proof without diving into details.
The first condition implies that, those four operators do not have common support for all of them. Hence, the set of parties can be divided into the four parts as illustrated in Fig.~\ref{fig:general_operator}(a).
By comparing the supports of the operators in different kinds of inflation as in Fig.~\ref{fig:general_operator}, we have the marginal relations
\begin{align}
    \meane{S_1}_\gamma = \meane{S_1}_\rho,~ \meane{S_2}_\gamma = \meane{S_2}_\rho,\\
    \meane{S_3}_\eta = \meane{S_3}_\gamma,~ \meane{S_{4'}}_\eta = \meane{S_4}_\rho.
\end{align}
Through those marginal relations, we can relate $\meane{S_3}_\eta$ and $\meane{S_{4'}}_\eta$ to $f = \mathcal{F}(\sigma,\rho)$. 
Finally, the second condition leads to the result that $f<1$. In fact, 
\begin{equation}\label{eq:fidelityupper}
  f \le \frac{7+\sqrt{4 + 5\lambda'/2}}{10} < 1,
\end{equation}
where  $\lambda' < 2$ is the maximal singular value of $\{S_3, S_{4'}^\dagger\}$.
\end{proof}

Note that all the bipartite sources in the network among parties 
in $\supp(S_1)$ have not been touched during the inflation procedure, 
so the proof still holds even if there is a multipartite source just 
affecting this set of parties. By exhaustive search and applying Theorem~\ref{ob:general} to multi-qudit graph states, we can 
figure out the situations where all $n$-partite qudit graph states 
in dimension $d$ cannot arise from a network with bipartite sources, 
see {Theorem~\ref{thm:qudit}, which is a consequence of Theorem~\ref{ob:general}, and} discussions therein in Appendix~C in the SI. The result is summarized in 
Table~\ref{table:graph state}. Moreover, any prime-dimensional 
graph state satisfying special structures as in Theorem~\ref{thm:qudit}
cannot be prepared in a network with bipartite sources.

%%%%%%%%%%%%%%%%%%%%%%%%%%%%%%%%%%%%%%%%%%%%%%%%%%%%%%%%%%%%%%%%%
\begin{table}[t!]
    \centering
    \begin{tabular}{|c|c|c|c|c|c|}
    \hline
         $d$ & 3 & 4 & 5, 7 & Prime dimension \\ \hline
         $n$ & 3,4,5,6 & 3,4,5 & 3,4 & 3\\ \hline
    \end{tabular}
    \caption{For a given dimension $d$ one can consider an $n$-partite graph state and ask whether Theorem~\ref{thm:qudit} can be used 
    to prove the impossibility of its preparation, see discussions in Appendix~C. This table identifies situations where {\it all} $n$-partite qudit graph states in dimension $d$ cannot be prepared by the network. 
    }
    \label{table:graph state}
\end{table}
%%%%%%%%%%%%%%%%%%%%%%%%%%%%%%%%%%%%%%%%%%%%%%%%%%%%%%%%%%%%%%%%%%

\textit{Conclusion and Discussions.}---
Here we have developed a toolbox to compare multi-qudit graph states 
and states which are generated in a quantum network  without 
memory and feed forward. By combining those tools related to symmetry and the inflation technique, we proved that all multi-qudit graph states, where the non-zero multiplicities are a constant, cannot be prepared in the quantum network with just bipartite sources. 
The result can also be generalized to a larger class of multi-qudit graph states and quantum networks with multipartite sources, in the case that the generalization does not affect the necessary marginal relations during the inflation procedure. The more general case with multi-partite sources is an interesting topic for future research. Furthermore, we provided a fidelity estimation of the multi-qudit graph states and network states based on a simple analysis. 

More effort should be contributed to the fidelity analysis in the future, such as introducing more types of inflation in
Ref.~\cite{hansenne2022symmetries}, and generalizing the techniques 
in Ref.~\cite{navascues2020genuine} for the states other than GHZ 
states. 
Another interesting project for further study is to consider
other families of states, like Dicke states or multi-particle 
singlet states, which are not described by a stabilizer formalism. 
Finally, from the fact that graph states cannot be prepared in the 
simple model of a network considered here, the question arises, which
additional resources (such as classical communication) facilitate the generation of such states. 
Characterizing these resources will help 
to implement quantum communication in networks in the real world.

{\it Note added:} While finishing this manuscript, we became aware of a related work by O. Makuta {\it et al.}~\cite{makuta2022graph}.  Albeit those two works originate from the same conjecture in Ref.~\cite{hansenne2022symmetries}, the techniques and results are different from few perspectives. Especially, we have only made use of two kinds of inflation and Theorem~2 here holds for all dimensions but with limited multiplicities. The resulting fidelity between the graphs states and network states has also different estimations. The application of Theorem~3, like in combination with Theorem~10, can cover more situations.

\begin{acknowledgments}
We thank
{Carlos de Gois, David Gross, Kiara Hansenne, 
Laurens Ligthart, and Mariami Gachechiladze}
for discussions. This work was supported by the
Deutsche Forschungsgemeinschaft (DFG, German 
Research Foundation, project numbers 447948357 
and 440958198), the Sino-German Center for 
Research Promotion (Project M-0294), the ERC 
(Consolidator Grant No. 683107/TempoQ) and the 
German Ministry of Education and Research 
(Project QuKuK, BMBF Grant No. 16KIS1618K). 
Z.P.X. acknowledges support from the Alexander von 
Humboldt Foundation, {National Natural Science Foundation of China} (Grant No.\ 12305007) and 
Anhui Provincial Natural Science Foundation (Grant No.\ 2308085QA29).
\end{acknowledgments}

\section{DATA AVAILABILITY}

Data sharing not applicable to this article as no data sets were generated or analyzed
during the current study.

\section{CODE AVAILABILITY}

The code used to perform related numerical computation are available from the corresponding author upon reasonable request.

\section{AUTHOR CONTRIBUTIONS}

%The theoretical description was created by all authors. The idea was conceived by Y.X.W. and Z.P.X. The project was supervised by O.G. All authors contributed to the preparation of the manuscript. 
Y.X.W., Z.P.X. and O.G. derived the results and wrote the manuscript. O.G. supervised the project.

\section{COMPETING INTERESTS}

The authors declare no competing interests.

\newpage
\appendix
\onecolumngrid

\vspace{5em}
\begin{center}
    \textbf{
Supplementary Information of \\``Quantum LOSR networks cannot generate graph states with high fidelity"
}
\end{center}

\setcounter{equation}{15}
\addtocounter{theorem}{0}
\section{Appendix A. Analysis of operators}\label{app:operators}
For discussions about fidelity analysis, it is convenient to develop the following mathematical tools.

\begin{lemma}\label{lm:product_ineq}
For three unitary operators $S_1, S_2, S_3=S_1 S_2$ {which commute with each other}, we have
\begin{equation}
\meane{S_3}\geq \operatorname{Prob}(S_1=S_2=1)\geq \meane{S_1}+\meane{S_2}-1,
\end{equation}
where $\meane{M}_\rho$ is the expectation value $\langle \Pi_M^{(1)} \rangle_\rho$, and $\Pi_M^{(1)}$ is the projector into the eigenspace of $M$ with eigenvalue $1$.
\end{lemma}

\begin{proof}
Since $S_3 = S_1S_2$, thus $\Pi_{S_3}^{(1)} \succeq \Pi_{S_1}^{(1)} \Pi_{S_2}^{(1)}$, which implies
\begin{equation}
     \meane{S_3}=\operatorname{Prob}(S_3=1)\geq \operatorname{Prob}(S_1=S_2=1).
\end{equation}
On the other hand~\cite{navascues2020genuine}, 
\begin{align}
    \operatorname{Prob}(S_1=S_2=1) \geq \operatorname{Prob}(S_1=1)+\operatorname{Prob}(S_2=1)-1  = \meane{S_1}+\meane{S_2}-1.
\end{align}

\end{proof}

\begin{lemma}\label{lm:fidelity}
For a given state $\rho$ and a pure stabilizer state $\sigma$ with the unitary stabilizer group $\mathcal{S}$, $\forall S \in \mathcal{S}$, we have
\begin{equation}
   \meane{S}_\rho\ge\mathcal{F}(\sigma, \rho) ,\ |\left\langle S\right\rangle_\rho| \ge \left\langle \re S\right\rangle_\rho \ge 2\meane{S}_\rho - 1,  
\end{equation}
where $\mathcal{F}(\sigma, \rho)$ is the fidelity between $\sigma, \rho$, $\re S = (S + S^\dagger)/2$. 
\end{lemma}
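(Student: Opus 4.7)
The plan is to handle the three inequalities in sequence, each reducing to an elementary spectral fact about the unitary operator $S$. The central observation is that $S$ is unitary with eigenvalues on the unit circle, and that $|\sigma\rangle$ sits in the $+1$-eigenspace of every stabilizer by definition of the stabilizer group.

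First, for $\meane{S}_\rho \ge \mathcal{F}(\sigma,\rho)$, I would note that $S|\sigma\rangle = |\sigma\rangle$, so $|\sigma\rangle$ lies in the $+1$-eigenspace of $S$ and the operator inequality $\Pi_S^{(1)} \succeq |\sigma\rangle\langle\sigma|$ holds. Tracing against $\rho$ then gives
\begin{equation}
\meane{S}_\rho = \tr(\Pi_S^{(1)} \rho) \ge \tr(|\sigma\rangle\langle\sigma|\rho) = \mathcal{F}(\sigma,\rho).
\end{equation}

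Second, the middle inequality $|\langle S\rangle_\rho| \ge \langle \re S\rangle_\rho$ is purely a statement about complex numbers: for $z = \langle S\rangle_\rho$ one has $|z| \ge \re(z) = \langle \re S\rangle_\rho$, using the definition $\re S = (S+S^\dagger)/2$ and linearity of the trace.

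Third, for $\langle \re S\rangle_\rho \ge 2\meane{S}_\rho - 1$, I would invoke the spectral decomposition $S = \sum_\lambda \lambda\, \Pi_\lambda$ with $|\lambda|=1$. Writing $p_\lambda = \tr(\Pi_\lambda \rho)$, the expectation becomes $\langle \re S\rangle_\rho = \sum_\lambda \re(\lambda)\, p_\lambda$. Separating the $\lambda=1$ contribution (which equals $\meane{S}_\rho$) from the rest and using the bound $\re(\lambda) \ge -1$ valid for all eigenvalues on the unit circle yields $\langle \re S\rangle_\rho \ge \meane{S}_\rho - (1 - \meane{S}_\rho) = 2\meane{S}_\rho - 1$. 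None of these three steps presents a real obstacle; the lemma is essentially a bookkeeping statement compiling standard facts about projectors onto eigenspaces of a unitary operator, and the only mild subtlety is remembering that $\meane{\cdot}$ refers to the $+1$-eigenprojector rather than to the ordinary expectation value $\langle\cdot\rangle$.
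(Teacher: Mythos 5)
Your proof is correct and follows essentially the same route as the paper: the first inequality via $\Pi_S^{(1)}\succeq|\sigma\rangle\langle\sigma|$, the second via $|z|\ge\re z$, and the third via the pointwise bound $\re\lambda\ge -1$ on the unit circle. The paper proves that last step by decomposing an arbitrary vector into its components inside and outside the $+1$-eigenspace to establish $S+S^\dagger+2\id\succeq 4\Pi_S^{(1)}$, which is exactly the operator inequality your spectral decomposition verifies eigenvalue by eigenvalue, so the two arguments are interchangeable.
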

\begin{proof}
$\meane{S}_\rho\ge\mathcal{F}(\sigma, \rho)$ holds since $\Pi_S^{(1)} \succeq \sigma$ and $\rho \succeq 0$.

For an arbitrary vector, we can decompose it as $|u\rangle + |v\rangle$ where $\langle u|v\rangle = 0$, $\Pi_S^{(1)} |v\rangle = |v\rangle$, $\Pi_S^{(1)} |u\rangle = 0$. 
Note that
\begin{align}
     \left({S+S^\dagger} - 4\Pi_S^{(1)} + 2\id\right) |v\rangle = 0.
\end{align}
This implies
\begin{align}
    (\langle u| + \langle v|) \left({S+S^\dagger} - 4\Pi_S^{(1)} + 2\id\right) (|u\rangle + |v\rangle)
    = \langle u|\left({S+S^\dagger}  + 2\id\right)|u\rangle
    \ge 0,
\end{align}
where the last inequality is from the fact that ${S+S^\dagger}$ is hermitian and all its eigenvalues are no less than $-2$. 

Thus,
\begin{align}
   2\langle \re  S\rangle_{\rho} = \left\langle S+S^\dagger\right\rangle_\rho \ge 4\meane{S}_\rho - 2 \ge 4\mathcal{F}(\sigma,\rho) - 2.
\end{align}
{On the other hand, the fact that $\langle \re  S\rangle_{\rho} = \re(\langle S\rangle_{\rho})$ is the real part of $\langle S\rangle_{\rho}$ implies that $|\left\langle S\right\rangle_\rho| \ge \left\langle \re S\right\rangle_\rho$.}
\end{proof}

\begin{lemma}\label{lm:eigenvalue}
For a given set of operators $\{S_i\}$,
\begin{equation}
    \sum_i |\expct{S_i}|^2 \le \frac{1}{2} \lambda_{\max}(\mathcal{S}),
\end{equation}
where $\mathcal{S}$ is a matrix with $(i,j)$-th element $\langle\{S_i, S_j^\dagger\}\rangle$, $\{ \cdot, \cdot \}$ is used as notation for the anticommutator, and $\lambda_{max}(S)$ is the maximal eigenvalue of $\mathcal{S}$.
\end{lemma}
\begin{proof}
For a given normalized complex vector $\{c_i\}$, denote $M=\sum_i c_i S_i$. We have
\begin{align}
    \langle \{M,M^\dagger\}\rangle = \sum_{i,j} c_i c_j^* \langle\{S_i, S_j^\dagger\}\rangle \le \lambda_{\max}(\mathcal{S}).
\end{align}
Note that $\mathcal{S}$ is hermitian.

Consequently, the inequality
$
\langle M \rangle \langle M^\dagger \rangle \leq \frac{1}{2}\langle \{M,M^\dagger\}\rangle
$
implies
\begin{equation}
\frac{1}{2}\lambda_{\max}(\mathcal{S}) \geq \langle M \rangle \langle M^\dagger \rangle = \left|\sum_{i} c_i  \expct{S_i} \right|^2.
\end{equation}
Since $\{c_i\}$ can be an arbitrary normalized vector, we have
\begin{equation}
\frac{1}{2}\lambda_{\max}(\mathcal{S}) \geq \sum_{i} |\expct{S_i}|^2.
\end{equation}
\end{proof}
In the case that $\mathcal{F}(\sigma, \rho) \ge 1/2$, by combining Lemma~\ref{lm:fidelity} and Lemma~\ref{lm:eigenvalue}, we have
\begin{equation}\label{eq:eigenvalue2}
    \frac{1}{2}\lambda_{\max}(\mathcal{S}) \geq \sum\nolimits_i |2\meane{S_i}_{\rho} -1|^2 ,
\end{equation}

\begin{lemma}\label{lm:incompatible}
For a given set of unitary operators $\{S_i\}_{i=1}^n$ satisfying $S_iS_j = -e^{i\theta_{ij}} S_jS_i$ where $\theta_{ij} \in (-\pi,\pi]$, we have
\begin{equation}\label{eq:incompatibility}
    \lambda_{\max} (\mathcal{S}) \le 2[1 + (n-1)\sin (\theta/2)],
\end{equation}
where $\rho$ is an arbitrary mixed state, $\theta = \max_{i\neq j} |\theta_{ij}|$.
\end{lemma}
\begin{proof}
In the case that $\{S_i\}$ is a set of unitaries, and $S_iS_j = -e^{i\theta_{ij}} S_j S_i$, we have 
\begin{equation}
     S_j^\dagger S_i=S_j^\dagger S_i S_j S_j^\dagger = -e^{i\theta_{ij}} S_j^\dagger S_j S_iS_j^\dagger=-e^{i\theta_{ij}}S_iS_j^\dagger.
\end{equation}
Consequently,
\begin{align}\label{eq:element}
    |\langle\{S_i, S_j^\dagger\}\rangle| = |(1-e^{i\theta_{ij}}) \langle S_i S_j^\dagger \rangle| \le 2 |\sin (\theta_{ij}/2)|.
\end{align}
According to Frobenius theorem, 
\begin{equation}
\lambda_{\max} (\mathcal{S}) \le \max_i \sum_{j} |\langle\{S_i, S_j^\dagger\}\rangle|.
\end{equation}
Together with Eq.~\eqref{eq:element}, this leads to
\begin{equation}
\lambda_{\max} (\mathcal{S}) \le 2[1 + (n-1) \sin(\theta/2)].
\end{equation}
\end{proof}

We remark that, in the case that $S_iS_j = -S_jS_i$, i.e., $\theta_{ij}=0$, Eq.~\eqref{eq:incompatibility} reduces to the result proposed in Ref.~\cite{toth2005entanglement}, which have been used in the comparison of qubit graph states and network states in Ref.~\cite{hansenne2022symmetries}. However note that these inequalities are not tight. If we choose an eigenstate of $S_1$ and suppose $\theta_{1i}\neq \pi,\ \forall i\neq 0 $, then $\langle S_i\rangle=\langle S_1 S_i\rangle = -e^{i\theta_{1i}}\langle S_i S_1\rangle=-e^{i\theta_{1i}}\langle S_i\rangle,\  \forall i\neq 1$, which implies $\langle S_i\rangle=0,\ \forall i\neq 1$, which does not saturate the upper bounds unless all $\theta_{ij}=0$.

\section{Appendix B. Proof of Theorem 1 and Theorem 3}\label{app:Proof of Theorem 1 and Theorem 3}
Here we provide the rest of proof of Theorem~\ref{ob: qudit constant multiplicity} and the analysis of the fidelity.
To proceed, we introduce some necessary notations.
\begin{definition}
Given a network $N = (V, E)$ and a subset $T$ of vertices, the reduced network {is given by} $N|_T = (T, E|_T)$ where
$E|_T = \{e\cap T| e\in E\}$  and $e\cap T$ is the intersection of $T$ and $e$ by treating $e$ as a subset of vertices.
\end{definition}
 We have one remark. For two edges $e$ and $e'$ in different kinds of inflation, we say $e\cap T = e'\cap T$ if and only if $e\cap T$ and $e'\cap T$ have same elements, besides, $e$ and $e'$ are copies from the same edge in the original network.
\begin{lemma}\label{lm:inflation}
Consider two given different kinds of inflation $N_\gamma, N_\eta$ of a network, then the reduced states $\tr_{{T}_c}(\gamma) = \tr_{{T}_c}(\eta)$ if the reduced networks obey $N_\gamma|_T = N_\eta|_T$, where ${T}_c$ is the complement set of $T$.
\end{lemma}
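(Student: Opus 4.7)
The plan is to prove the lemma by writing out the explicit form of both inflated states and then pushing the partial trace over $T_c$ through the tensor structure of the network. First I would fix notation: for either inflation $N_\gamma$ or $N_\eta$, the state has the form
\begin{equation}
\sum_{\lambda} p_\lambda \bigotimes_{X \in V(N)} \mathcal{E}_X^{(\lambda)} \Bigl[\bigotimes_{e\in E(N)} \varsigma_e \Bigr],
\end{equation}
where each source $\varsigma_e$ is a multipartite state whose particles are dispatched to the vertices listed in $e$, and $\mathcal{E}_X^{(\lambda)}$ is the local channel applied at party $X$. Two different inflations differ only in how the particles of each source $\varsigma_e$ are wired to the copies of the parties, i.e.\ only in the incidence structure recorded by $E(N_\gamma)$ versus $E(N_\eta)$.

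Next I would use the key algebraic fact that the local channels are trace preserving, so for any $X\in T_c$,
\begin{equation}
\tr_X \bigl( \mathcal{E}_X^{(\lambda)}[\sigma] \bigr) = \tr_X(\sigma).
\end{equation}
Applying this successively to every party $X\in T_c$ allows the outer channels at parties in $T_c$ to be discarded, reducing $\tr_{T_c}(\gamma)$ (resp.\ $\tr_{T_c}(\eta)$) to a convex combination over $\lambda$ of the channels on $T$ acting on the reduced source states where all particles belonging to $T_c$ have been traced out.

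Now I would invoke the hypothesis $N_\gamma|_T = N_\eta|_T$. By the definition just given, $E|_T = \{e\cap T \mid e\in E\}$, so the assumption says that for every source $e$, the set of parties in $T$ that $e$ serves is identical in the two inflations. Consequently, the reduced source $\tr_{e\setminus T}(\varsigma_e)$ is dispatched in exactly the same way to the same parties of $T$ in both inflations, and the channels $\mathcal{E}_X^{(\lambda)}$ for $X\in T$ act at the same places. Therefore the expressions for $\tr_{T_c}(\gamma)$ and $\tr_{T_c}(\eta)$ coincide term by term in $\lambda$, which gives the claimed equality.

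I do not anticipate a serious obstacle: the proof is essentially bookkeeping, and the only subtlety is to make sure that ``wiring'' of sources to parties is captured correctly by the restriction $N|_T$, so that $N_\gamma|_T = N_\eta|_T$ genuinely implies that every reduced source $\tr_{e\setminus T}(\varsigma_e)$ feeds the same labeled particles into the same copies of parties in $T$. If one wishes to be fully pedantic, this can be formalized by labeling each particle of each source by a pair (source, target vertex) and checking that the induced bijection between the two inflations respects $T$; once this is done the identity follows immediately.
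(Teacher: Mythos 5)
Your argument is correct, and it is essentially the only argument available: the paper states this lemma without proof (treating it as a standard fact from the inflation literature), so there is no alternative derivation to compare against. Your two key steps --- discarding the channels at parties in $T_c$ via trace preservation, and observing that the remaining expression depends only on which particles of which source reach which party of $T$ --- are exactly the right bookkeeping. The one point worth being explicit about, which you correctly flag yourself, is that the paper's Definition of $N|_T$ records only the vertex sets $e\cap T$ and not which source state is attached to each reduced edge; for the conclusion to follow one needs the reduced networks to agree as \emph{labelled} objects, i.e.\ the identification $e\mapsto e\cap T$ must pair edges carrying copies of the same original source $\varsigma_e$. In an inflation this is automatic (every edge copy inherits the source of the edge it copies, and every party copy receives the same collection of input systems as the original party, so the channels are well defined on both sides), and with that understanding your term-by-term identification in $\lambda$ closes the proof.
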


\setcounter{theorem}{0}
\begin{theorem}
\label{ob: qudit constant multiplicity2}
Qudit graph states with connected graph and multiplicities either the constant $m$ or $0$ cannot be prepared in any network with bipartite sources.
\end{theorem}
\addtocounter{theorem}{6}
%%%%%%%%%%%%%%%%%%%%%%%%%%%%%%%%%%%%%%%%%%%%%%%%%%%%%%%%%%%%%%%%%%%%
\begin{figure}[t]
    \centering
    \includegraphics[width=0.45\textwidth]{fig3.pdf}
    \caption{{\textcolor{black}{Fully connected network with bipartite sources and two kinds of its inflation, where the corresponding states are denoted 
    as $\rho, \gamma$ and $\eta$. The sources which have not been 
    changed in the whole proof are omitted in the figure.}}
    }
    \label{fig:network_full_app}
\end{figure}
%%%%%%%%%%%%%%%%%%%%%%%%%%%%%%%%%%%%%%%%%%%%%%%%%%%%%%%%%%%%%%%%%%%%%
\begin{proof}
{\textcolor{black}{
% Suppose the multiplicities of each edge is either zero or a constant, i.e., $m_{ij}=m$ or $0$. Note that an edge with multiplicity $0$ is equivalent to absence of the edge. For a connected graph $G$ with no less than three vertices, there are always three vertices $A, B, C$ such that $(A,B), (A,C)$ are two edges. In the case that $(B, C)$ is not a edge, we call $(A,B,C)$ an angle. Otherwise, we call $(A,B,C)$ a triangle.
We first prove the theorem for the case where $(A,B,C)$ is a triangle, then prove for the angle case.\\
For convenience, we denote $T_{ABC}$ the common neighborhood of $A, B, C$ in the graph $G$, $J_{AB}$ the common neighborhood of $A, B$ but not $C$, $E_A$ the neighborhood which is not shared by $B, C$ and so on. Those sets of parties can be grouped as in Fig.~\ref{fig:network_full_app}. The structure of the proof is in the same spirit as the one for $\ghz$ state.
The key point is to keep necessary marginal relations in different kinds of inflation and derive contradiction. \\
Let us choose the three operators $S_1 = g_Ag_B^\dagger$, $S_2 = g_A^\dagger g_C$, $S_3 = S_1 S_2 = g_B^\dagger g_C$. By definition,
\begin{equation}
    S_1 \! =\! (X_AZ_A^{-m})(X_B Z_B^{-m})^\dagger Z^m_{E_A} Z^m_{J_{CA}} (Z^m_{E_B} Z^m_{J_{BC}})^\dagger
\end{equation}
has support $\supp(S_1) = \{A, B, E_A, E_B, J_{CA}, J_{BC}\}$.
Since $N_\rho|_{\supp(S_1)} = N_\gamma|_{\supp(S_1)}$, we have
\begin{align}\label{eq:gagb_app}
    \meane{S_1}_\gamma = \meane{S_1}_\rho.
\end{align}
Similarly, $\supp(S_2) = \{A, C, E_A, E_C, J_{AB}, J_{BC}\}$ and $N_\rho|_{\supp (S_2)} = N_\gamma|_{\supp(S_2)}$ lead to
\begin{align}\label{eq:gagc}
    \meane{S_2}_\gamma = \meane{S_2}_\rho.
\end{align}
The fact that $\supp (S_3) = \{B, C, E_B, E_C, J_{AB}, J_{CA}\}$ and $N_\eta|_{\supp (S_3)} = N_\gamma|_{\supp (S_3)}$ leads to
\begin{align}\label{eq:gbgc}
    \expct{S_3}_\eta = \expct{S_3}_\gamma.
\end{align}
Note that $\supp (g_C) = \{C, A, B, E_C, J_{CA}, J_{BC}, T_{ABC}\}$, $N_\rho|_{\supp (g_C)}$ is isomorphic to $N_\eta|_{\supp (g_{C^\prime})}$ with $\supp (g_{C^\prime}) = \{C^\prime, A, B, E_{C}^\prime, J_{CA}, J_{BC}, T_{ABC}\}$ by relabeling $C$ and $E_C$ by $C^\prime$ and $E_C^\prime$ in all parts. \\
If we denote by $S_4 = g_C^t$ the $t$-th power of $g_C$, this implies 
that
\begin{equation}\label{eq:gc}
    \expct{S_4}_\rho = \expct{S_{4^\prime}}_\eta,
\end{equation}
where $S_{4^\prime}$ is related to $S_4$ by changing party $C^\prime$ in the support to $C$. \\
If we assume the graph state $|G\rangle$ can be generated as the network state $\rho$, Eqs.~(\ref{eq:gagb_app}-\ref{eq:gc}) lead to $\expct{S_{4^\prime }}_\eta = \expct{S_3}_\eta = 1$, which conflicts with the fact that
\begin{equation}\label{eq:angled_app}
    |\expct{S_3}_\eta|^2 + |\expct{S_{4^\prime}}_\eta|^2 \le 1 + \sin\theta_{t,d},
\end{equation}
where $\sin\theta_{t,d}=|\cos(tm\pi/d)|$, and we can choose appropriate $t$ such that $0\le\theta_{t,d}\le\pi/6$, see below.
In fact, similar to the analysis for the GHZ state, conditions in Eqs.~(\ref{eq:gagb_app}-\ref{eq:angled_app}) lead to that 
\begin{equation}\label{eq:ghnetfidelity_app}
\mathcal{F}(|G\rangle\langle G|, \rho) \le \frac{7+\sqrt{4+5\sin(\theta_{t,d})}}{10}<0.95495.
\end{equation}
This is a universal bound for arbitrary configuration of 
equal-multiplicity multi-qudit graph states.
}}

Now we prove for the angle case, with same spirit as the triangle case. It suffices to explain how the two conditions in Theorem~\ref{ob:general_app} can be satisfied. 

Without loss of generality, we assume $(A, B, C)$ is the angle in the graph for the graph state, that is, there are only two edges $(A,B)$ and $(A,C)$.
Now we choose $S_1= g_C^\dagger$, $S_2=g_B$ and $S_4=g_A^t$, then  $S_3=g_B g_C^\dagger$.
The supports of the operators are
\begin{align}
    &\supp(S_1) = \{C, A, E_C, J_{CA}, J_{BC}, T_{ABC}\},\\
    &\supp(S_2) = \{B, A, E_B, J_{AB}, J_{BC}, T_{ABC}\},\\
    &\supp(S_3) = \{C, B, E_B, E_C, J_{AB}, J_{CA}\},\\
    &\supp(S_4) \subseteq \{A, B, C, E_A, J_{AB}, J_{CA}, T_{ABC}\}.
\end{align}

{
By grouping the parties into four groups as illustrated in Fig.~\ref{fig:triangle2}, that is,
\begin{align}
    G_1= \{B, J_{AB}\},\ 
    G_2 = \{C, J_{CA}\},\ 
    G_3 = \{A,E_A, T_{ABC}\},\ 
    G_4= \{E_B,E_C, J_{BC}\},
\end{align} 
One can verify that $\supp(S_i) \cap G_i = \emptyset$, for $i=1,2,3,4$. Hence, the first condition in Theorem~\ref{ob:general} is satisfied.
}
{
Since the stabilizers $S_i$'s are in product form, to verify the second condition in Theorem~\ref{ob:general} is equivalent to verify that $S_3|_{G_2}$ and $S_{4'}|_{G_2}$ has no common eigenvector with eigenvalue $1$.
}

\begin{figure}[t]
    \centering
\includegraphics[width=0.8\textwidth]{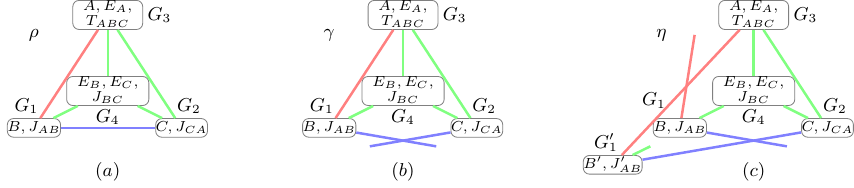}
    \caption{Three types of inflation and support of related regions in the angle case.}
    \label{fig:triangle2}
\end{figure}

As we can see, 
\begin{equation}
    S_3|_{G_2} = X_C^\dagger Z_{J_{CA}}^m,\ S_{4'}|_{G_2} = Z_C^{tm} Z_{J_{CA}}^{tm},
\end{equation}
this implies that $S_3|_{G_2}$ and $S_{4'}|_{G_2}$ cannot share a common eigenstate because $Z_C^{tm}X_C^\dagger = \omega^{tm} X_C^\dagger Z_C^{tm}$ where $t$ is chosen such that $\omega^{tm} \neq 1$ rather closest to $-1$. With the inflation process as shown in Fig.~\ref{fig:triangle2}, we also conclude that the graph state with an angle cannot originate from bipartite network.

We continue to estimate the fidelity $\mathcal{F}(\sigma,\rho)$ between the graph state $\sigma$ and the network state $\rho$ in both cases with either angle or triangle.
In the case that $S_3 S_{4'} = \omega^{tm} S_{4'} S_3$ with $\omega = \exp{2\pi i/d}$, 
\begin{equation}
    \{S_3, S_{4'}^\dagger\} = S_3S_{4'}^\dagger + S_{4'}^\dagger S_3 = (1+\omega^{tm}) S_3S_{4'}^\dagger,
\end{equation}
which implies that the maximal absolute value of the eigenvalues of $\{S_3, S_{4'}^\dagger\}$, i.e., $\lambda'$ is no more than $|1+\omega^{tm}| =  2|\cos(tm\pi /d)|${\textcolor{black}, thus justify the value of $\theta_{t,d}$ above}. Here we claim that $|\cos(tm\pi /d)| \le 1/2$ by choosing $t$ properly. By inserting this upper bound into Eq.~\eqref{eq:fidelityupper}, we know that 
\begin{equation}
    \mathcal{F}(\sigma, \rho) \le \frac{7+\sqrt{4 + 5/2}}{10} \approx 0.95495.
\end{equation}

Notice that, $m \in \{1,\ldots, d-1\}$. Denote $m' = m/\operatorname{gcd}(m,d), d'= d/\operatorname{gcd}(m,d)$ where $\operatorname{gcd}(m,d)$ is the greatest common divisor of $m,d$. Thus, $m', d'$ are relatively prime and $d'\ge 2$. Consequently, there exists integers $t_0$ and $k$ such that $t_0m' = kd' + 1$. If we choose $t = t_0 \lfloor d'/2 \rfloor$, then 
\begin{equation}
    \frac{tm'}{d'} = \frac{ (kd'+1)\lfloor d'/2 \rfloor}{d'} = k \lfloor d'/2 \rfloor + \frac{ \lfloor d'/2 \rfloor}{d'}.
\end{equation}
It is direct to see that $-1/6 \le { \lfloor d'/2 \rfloor}/{d'} - 1/2 \le 0$.
Hence,
\begin{equation}
    \sin\theta_{t,d}\left|\cos\left(\frac{tm\pi}{ d}\right)\right| = \left|\cos\left(\frac{ \lfloor d'/2 \rfloor \pi}{d'}\right)\right| =
    \begin{cases}
    0, & d' \text{ is even},\\
    \sin\left(\pi/(2d')\right), & d' \text{ is odd},
    \end{cases}
\end{equation}
which implies $|\cos(tm\pi/d)| < 1/2$ directly.

\end{proof}

%%%%%%%%%%%%%%%%%%%%%%%%%%%%%%%%%%%%%%%%%%%%%%%%%%%%%%%%5
\begin{figure}[t]
\includegraphics[width=0.45\textwidth]{fig4.pdf}
    \caption{General inflation scheme in bipartite network similar to Fig.~\ref{fig:network_full}. In (a), $G_i$'s are group of parties in a partition, the green, red, blue and dashed circles stand for the supports for operators $S_1, S_2, S_3, S_4$, respectively. In (b), (c) and (d), we replace the label of each group by the indices of the $S_i$'s which has this group as support. The green shadow represents a multipartite source relating to all the groups in it. The sources which have not been changed in the whole proof are omitted in the figures.}
    \label{fig:general_operator_app}
\end{figure}
%%%%%%%%%%%%%%%%%%%%%%%%%%%%%%%%%%%%%%%%%%%%%%%%%%%%%%%%%%%%%
Here we also present proof of Theorem~\ref{ob:general_app}.
\addtocounter{theorem}{-5}
\begin{theorem}\label{ob:general_app}
For a given {pure} state $\sigma$ with {commuting} (unitary or projection) stabilizers $\{S_1, S_2, S_3=S_1 S_2, S_4\}$, it cannot be prepared in bipartite network if 
\begin{enumerate}
    \item all the parties can be grouped into  $\{G_i\}_{i=1}^4$ such that $S_i$ has no support in $G_i$, see also Fig.~\ref{fig:general_operator}(a);
    \item {$S_1$, $S_2$ commutes, and} $S_3, S_{4'}$ have no common eigenvectors with eigenvalue $1$, where $S_{4'}$ has the support $G_2, G_3$ and a copy of $G_1$, and $S_{4'}$ acts there same as $S_4$ on $\supp{(S_4)}$.
\end{enumerate}
\end{theorem}
\addtocounter{theorem}{6}
\begin{proof}
Here we provide the main steps of the proof without diving into details.
The first condition implies that, those four operators do not have common support for all of them. Hence, the set of parties can be divided into the four parts as illustrated in Fig.~\ref{fig:general_operator_app}(a).
By comparing the supports of operators $S_1, S_2$ in the original network in Fig.~\ref{fig:general_operator_app}(b) and in the inflation in Fig.~\ref{fig:general_operator_app}(c), we have the marginal relations
\begin{equation}
    \meane{S_1}_\gamma = \meane{S_1}_\rho,~ \meane{S_2}_\gamma = \meane{S_2}_\rho.
\end{equation}
The fact that $S_3 = S_1S_2$ implies
\begin{equation}
    \meane{S_3}_\gamma \ge \meane{S_1}_\gamma + \meane{S_2}_\gamma - 1.
\end{equation}
From the comparison of support of operator $S_3$ in the inflation related to $\gamma$ and the inflation related to $\eta$ as shown in Fig.~\ref{fig:general_operator_app}(c) and Fig.~\ref{fig:general_operator_app}(d), we know that
\begin{equation}
    \meane{S_3}_\eta = \meane{S_3}_\gamma.
\end{equation}
Similarly, by comparing the original network and the one related to $\eta$, it holds that
\begin{equation}
    \meane{S_{4'}}_\eta = \meane{S_4}_\rho.
\end{equation}
Since $S_3, S_{4'}$ have no common eigenvectors with eigenvalue $1$, the values $\meane{S_3}_\eta$ and $\meane{S_{4'}}_\eta$ cannot be $1$ at the same time, which contradicts with the assumption that $S_1, S_2, S_4$ are stabilizers of $\rho$.
\end{proof}

 Here let us also study bounds on fidelity. Denote $\lambda'$ the maximal absolute value of the eigenvalues of $\{S_3, S_{4'}^\dagger\}$.
{If we assume that $f = \mathcal{F}(\sigma,\rho) \ge 3/4$,}
Lemma~\ref{lm:eigenvalue} and Eq.~\eqref{eq:eigenvalue2} imply that
\begin{align}
    1+\lambda'/2 &\ge 1+ |\langle\{S_3, S_{4'} \}\rangle_{\eta}|/2\nonumber\\
    &\geq (2\meane{S_3}_\eta -1)^2 +(2\meane{S_{4'}}_\eta -1)^2.
\end{align}
Besides, by using Lemma~\ref{lm:product_ineq} and Lemma~\ref{lm:fidelity}, we have
\begin{equation}
    \meane{S_3}_\eta \ge 2f -1,\ \meane{S_{4'}}_\eta \ge f.
\end{equation}
%where $f = \mathcal{F}(\sigma, \rho)$, $\sigma$ is the stabilizer state and $\rho$ is the network state.
The combination of them leads to
\begin{equation}\label{eq:fidelityupper_app}
  \frac{7 - \sqrt{4 + 5\lambda'/2}}{10} \le  f \le \frac{7+\sqrt{4 + 5\lambda'/2}}{10},
\end{equation}
where the lower bound is smaller than $3/4$ and the upper bound is larger than $3/4$. Hence, whatever the assumption $f\ge 3/4$ holds or not, the upper bound in Eq.~\eqref{eq:fidelityupper} always holds.  
%{\textcolor{black}{Note that combination of the two equations above implies a quadratic inequality. In other words, this might imply a lower bound on fidelity. However, since we also have made the assumption that $f>3/4$, this smaller root is ignored. }}

\section{Appendix C. Another theorem of qudit graph states}\label{app:Another theorem of qudit graph states}

We have discussed qudit graph states with constant multiplicities in the main text. Theorem~\ref{ob:general} can also tackle the cases with different multiplicities.
Denote $m_{ij}$ the multiplicity of the edge $(i,j)$ in the graph for the qudit graph state hereafter.
\begin{theorem}
\label{thm:qudit}
For a graph states in arbitrary dimension, if  the graph satisfies one of the following conditions, this graph states cannot be prepared in arbitrary bipartite network. These conditions are
\begin{enumerate}
    \item There is a triangle $(A, B, C)$ in the graph, and  $J_{XY}=J_{YZ}=T_{ABC}=\emptyset$, and $(m_{XY} m_{YZ}/h) \pmod{d} \neq 0$, where $h = \gcd(m_{AB}, m_{CA}, m_{BC})$ and $X, Y, Z$ is a permutation of $A, B, C$.
    \item There is an angle $(A, B, C)$ in the graph, where $(B,C)$ is not an edge, $T_{ABC}=\emptyset$, 
    and $(m_{AB} m_{CA}/h') \pmod{d} \neq 0$, where $h' = \gcd(m_{AB}, m_{CA})$.
\end{enumerate}
We also have an upper bound for fidelity less than $0.95495$, but it differs for different dimensions and multiplicities, so we will leave discussions about fidelities for later.
\end{theorem}

{We remark that, the condition about multiplicities is always satisfied when $d$ is a prime number, i.e., this theorem works for arbitrary non-zero multiplicities $m_{XY}, m_{YZ}$ for prime-dimensional graph states.}

\begin{proof}
Here we explain how the two conditions in Theorem~\ref{ob:general} can be satisfied. 

\textit{1. The triangle case}

Assume $(A, B, C)$ is the triangle in the graph. Without loss of generality, we assume that $Y = A, X = B, Z = C$. We choose stabilizers $S_1=g_A^a g_C^c$, $S_2=g_A^{-a} g_B^{-b}$, $S_3=S_1 S_2= g_B^{-b} g_C^{c}$ and $S_4=g_A^t$ with the parameters $(a,b,c,t)$. To carry out the same inflation procedure as in the proof of Theorem~\ref{ob:general}, one option is to have $S_1|_B = \id_B, S_2|_C = \id_C, S_3|_A = \id_A$. For those conditions to be satisfied, we need
\begin{align}
    d| (a\, m_{AB} + c\, m_{BC}), \quad d| (-a\, m_{CA} - b\, m_{BC}),
    \quad d| (-b\,  m_{AB} + c\, m_{CA}),
\end{align}
which are satisfied by setting
\begin{equation}
    c= m_{AB}/h,\ b= m_{CA}/h,\ a = -m_{BC}/h.
\end{equation}
Under the assumption that $J_{AB}=J_{CA}=T_{ABC}=\emptyset$, we have
\begin{align}
    &\{C, A\} \subseteq \supp(S_1) \subseteq \{C, A, E_C, E_A, J_{BC}\},\\ 
    &\{A, B\} \subseteq \supp(S_2) \subseteq \{A, B, E_A, E_B, J_{BC}\},\\
    &\{B, C\} \subseteq \supp(S_3) \subseteq \{B, C, E_B, E_C, J_{BC}\},\\ 
    &\supp(S_4) \subseteq \{A, B, C, E_A\}.
\end{align}
{
By grouping the parties into four groups similar to Fig.~\ref{fig:network_full} {but with $J_{AB}=J_{CA}=T_{ABC}=\emptyset$}, that is,
\begin{align}
    G_1= \{B\},\ G_2 = \{C\},\ G_3 = \{A,E_A\},\ G_4= \{E_B,E_C, J_{BC}\},
\end{align}
one can verify that $\supp(S_i) \cap G_i = \emptyset$, for $i=1,2,3,4$. Hence, the first condition in Theorem~\ref{ob:general} is satisfied.
}
{
Similar to Theorem~\ref{ob: qudit constant multiplicity}, it suffices to verify that $S_3|_{G_2}$ and $S_{4'}|_{G_2}$ has no common eigenvector with eigenvalue $1$.
}
{
From $S_{4'}|_{G_2} = Z^{tm_{CA}}_C , S_3|_{G_2} = Z^{-b m_{BC}}X^c$, we have $S_3S_{4'} = \omega^{-t \tilde{m}} S_{4'}S_3$, where $\tilde{m} = m_{AB}m_{CA}/h \pmod{d}$. By assumption, $1\le \tilde{m} \le d-1$.
}
By choosing $t$ suitably as discussed in Appendix~B, the second condition can be satisfied. We also have the same upper bound of fidelity between the graph state and the network state, i.e., $0.95495$.

\textit{2. The angle case}

Denote $(A,B,C)$ the angle, equivalently, $m_{AB},m_{CA}\neq 0$ but $m_{BC}=0$. In this case, we choose $S_1=g_C^c$, $S_2= g_B^b$,  $S_3=S_1S_2 = g_C^c g_B^b$ and $S_4 = g_A^{t}$. 
Automatically, $S_1|_B = \id_B, S_2|_C = \id_C$. As for $S_3|_A = \id_A$, this requires $d| (b\, m_{AB}+ c\, m_{CA})$, which is satisfied with
\begin{equation}
    c = m_{AB}/h',\ b = -m_{CA}/h'.
\end{equation}
Under the assumption that $T_{ABC}=\emptyset$ and and $(m_{AB} m_{CA}/h') \pmod{d} \neq 0$, we have
\begin{align}
    &\{C, A\} \subseteq \supp(S_1) \subseteq \{C, A, E_C, J_{CA}, J_{BC}\},\\ 
    &\{B,A\} \subseteq \supp(S_2) \subseteq \{B, A, E_B, J_{AB}, J_{BC}\},\\
    &\supp(S_3) \subseteq (\supp(S_1) \cup \supp(S_2) )\setminus \{A\},\\ 
    &\supp(S_4) \subseteq \{A, B, C, E_A, J_{AB}, J_{CA}\}.
\end{align}
{
By grouping the parties into four groups {similar to Fig.~\ref{fig:triangle2} but with $T_{ABC}=\emptyset$} , that is,
\begin{align}
    G_1= \{B, J_{AB}\},\ G_2 = \{C, J_{CA}\},\ G_3 = \{A,E_A\},\ G_4= \{E_B,E_C, J_{BC}\}.
\end{align}
One can verify that $\supp(S_i) \cap G_i = \emptyset$, for $i=1,2,3,4$. Hence, the first condition in Theorem~\ref{ob:general} is satisfied.
}
{
Similar to Theorem~\ref{ob: qudit constant multiplicity}, it suffices to verify that $S_3|_{G_2}$ and $S_{4'}|_{G_2}$ has no common eigenvector with eigenvalue $1$.
}
The fact that
\begin{equation}
    S_{4^\prime}|_{G_2}=Z_C^{t m_{CA}} \prod_{i\in J_{CA}} Z_i^{t m_{iA}},\ S_{3}|_{G_2}=X_C^{c} \prod_{i\in J_{CA}} Z_i^{c m_{iC}}
\end{equation}
leads to $S_3 S_{4^\prime}= \omega^{-ctm_{CA}} S_{4^\prime} S_3 = \omega^{-t \tilde{m}} S_{4^\prime} S_3 $, where $\tilde{m} = m_{AB} m_{CA}/h' \pmod{d}$. 
By assumption, $1\le \tilde{m} \le d-1$.
By choosing $e$ suitably as discussed in Appendix~B, the second condition can be satisfied. Further, we result in the same upper bound of fidelity between the graph state and the network state, i.e., $0.95495$.
\end{proof}

In practice, the conditions in Theorem~\ref{thm:qudit} are not applicable in some cases, for example, the condition $T_{ABC}=\emptyset$ does not hold for the fully connected graph states. However, one can still try to find out other kinds of stabilizers which fulfill the two conditions in Theorem~\ref{ob:general}.
Another option is to convert the graph state to another one, which satisfies the conditions in Theorem~\ref{thm:qudit}, by applying local complementation operation\cite{LCPhysRevA.82.062315}. In the case of qudit graph states,  this operation on a given vertex $a$ of the graph is
\begin{equation}
m_{ij} \rightarrow (m_{ij}+m_{ai}\times m_{aj}) \mod d,\quad \forall i,j\in N_a,
\end{equation}
where $N_a$ is the neighborhood of vertex $a$.
For the qubit graph state, any non-zero $m_{ij}$ can only be $1$. Consequently, the local complementation operation on a given vertex $a$ is $m_{ij} \to 1-m_{ij}, \forall i,j \in N_a$.

It is easy to see that the angle $(A, B, C)$ can be mapped to the triangle by local complementation for qubit graph state. This is not true generally for the qudit graph states.

Now we can determine whether a given graph state with connected graph  can be prepared in a bipartite quantum network based on Theorem~\ref{thm:qudit}. For a given graph, we can run over all possible local complementation operations and determine if there exists any equivalent graph such that one of the two conditions is satisfied. If so, then this graph (and the equivalent graphs) cannot be prepared in any bipartite network. Running over $d=3$, $n=3,4,5,6$; $d=4$, $n=3,4,5$; and $d=5,7$, $n=3,4$ cases, where $d$ is local dimension and $n$ is number of parties, we found all of these graph states cannot be prepared in bipartite network. And for prime dimension and $n=3$, the conditions always hold. On the one hand, conditions on support is automatically true for $n=3$. On the other, since $d$ is prime, $(m_{XY} m_{YZ}/h) \pmod{d} \neq 0$ and $(m_{AB} m_{CA}/h') \pmod{d} \neq 0$ also always hold. Thus, those graph states cannot be prepared in any bipartite network. We then summarize those results in Table~\ref{table:graph state} in the main text.

We remark that, even up to local complementation, the two conditions in Theorem~\ref{ob:general} cannot be satisfied for the $6$-dimensional tripartite qudit angle graph states, where $m_{AB}=3$, $m_{CA}=2$ and $m_{BC}=0$.

\section{Appendix D. Better bounds for fidelity of tripartite GHZ states}\label{app:Better bounds for fidelity of GHZ states}

Now we will give a more detailed study on the bounds for fidelity of tripartite $d$-dimensional GHZ states. For tripartite GHZ states we can impose additional constraints, which results in tighter bound.

\subsection{1. Two extra lemmas}
Firstly we introduce extra inequalities which can improve the estimation of fidelity a little bit.
\begin{lemma}\label{lm:corr_sum}
Consider two commuting unitaries $S_1, S_2$, denote $S_3 = S_1 S_2$, then
\begin{align}
    &\langle \re  S_3 \rangle \geq \langle \re  S_1\rangle+\langle \re  S_2\rangle -3/2 ,\label{eq:meaninequality}\\
    &|\langle S_3\rangle|^2>|\langle S_1\rangle|^2+|\langle S_2\rangle|^2 -3/2.\label{eq:meansuare}
\end{align}
\end{lemma}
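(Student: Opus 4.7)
The plan is to reduce both inequalities to pointwise trigonometric identities via simultaneous diagonalization of $S_1$ and $S_2$. Since $[S_1,S_2]=0$ and both are unitary, they share a common eigenbasis; let $\{P_j\}$ denote the projectors onto the joint eigenspaces with $S_1 P_j=e^{i\alpha_j}P_j$ and $S_2 P_j=e^{i\beta_j}P_j$, so that $S_3 P_j=e^{i(\alpha_j+\beta_j)}P_j$. Setting $p_j=\tr(P_j\rho)\ge 0$, a probability distribution, gives $\langle S_1\rangle=\sum_j p_j e^{i\alpha_j}$, $\langle S_2\rangle=\sum_j p_j e^{i\beta_j}$, and $\langle S_3\rangle=\sum_j p_j e^{i(\alpha_j+\beta_j)}$.

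For \eqref{eq:meaninequality}, taking real parts reduces the inequality to showing $\sum_j p_j\bigl[\cos(\alpha_j+\beta_j)-\cos\alpha_j-\cos\beta_j+3/2\bigr]\ge 0$, so it suffices to prove the pointwise bound $f(\alpha,\beta):=\cos(\alpha+\beta)-\cos\alpha-\cos\beta+3/2\ge 0$. The plan is to substitute $u=(\alpha+\beta)/2$, $v=(\alpha-\beta)/2$, which rewrites $f$ as a quadratic $2\cos^2 u-2\cos u\cos v+1/2$ in $x=\cos u$. Its discriminant $4\cos^2 v-4=-4\sin^2 v\le 0$ is non-positive, so the quadratic is globally non-negative, establishing \eqref{eq:meaninequality}. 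Equality occurs only when $\alpha=\beta=\pm\pi/3$, a fact useful for later sharpness considerations.

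For \eqref{eq:meansuare}, the strategy is to expand $|\langle S_1\rangle|^2+|\langle S_2\rangle|^2-|\langle S_3\rangle|^2$ as a double sum $\sum_{k,l}p_k p_l\bigl[e^{i(\alpha_k-\alpha_l)}+e^{i(\beta_k-\beta_l)}-e^{i(\alpha_k-\alpha_l+\beta_k-\beta_l)}\bigr]$. This double sum is manifestly real because its summand is Hermitian under $(k,l)\leftrightarrow(l,k)$, so I may replace it by its real part. Setting $a=\alpha_k-\alpha_l$ and $b=\beta_k-\beta_l$, the task reduces to the pointwise bound $g(a,b):=\cos a+\cos b-\cos(a+b)\le 3/2$. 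The same half-sum/half-difference substitution turns $g$ into a downward-opening quadratic in $\cos u$ whose global maximum equals $\cos^2 v/2+1\le 3/2$, establishing the pointwise inequality.

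The strict inequality in \eqref{eq:meansuare} is the main subtlety: it requires that the pointwise bound $3/2$ is not saturated on the support. Here I would exploit the diagonal contributions $(k,l)=(k,k)$, where $(a,b)=(0,0)$ gives $g(0,0)=1<3/2$. Since the $p_k^2$ weights are strictly positive on the support and cannot all vanish, splitting the sum as $\sum_k p_k^2\cdot 1+\sum_{k\ne l}p_k p_l\,g(a,b)\le \sum_k p_k^2+(3/2)(1-\sum_k p_k^2)=3/2-(1/2)\sum_k p_k^2<3/2$ gives the desired strict bound. The only genuine obstacle is this last step, that of upgrading the non-strict pointwise estimate to a strict global one for arbitrary states, but it is handled cleanly by the diagonal-versus-off-diagonal split above.
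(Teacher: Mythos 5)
Your proof is correct and follows essentially the same route as the paper's: simultaneous diagonalization of the commuting unitaries, the pointwise trigonometric bound $\cos(a+b)-\cos a-\cos b\ge -3/2$ via the half-sum/half-difference substitution, and a diagonal-versus-off-diagonal split of the double sum to get strictness in \eqref{eq:meansuare}. Your execution of the strictness step is in fact cleaner: the paper's final algebraic lines contain sign slips (its stated bound $-(3+1/d)/2$ should read $-(3-1/d)/2$), whereas your $3/2-\tfrac{1}{2}\sum_k p_k^2<3/2$ is the correct conclusion of the same idea.
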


\begin{proof}
Since $S_1, S_2$ commute with each other, without loss of generality, we can assume they are diagonalized. Denote $e^{i \theta_{1j}}$ the $j$-th diagonal item of $S_1$, 
$e^{i \theta_{2j}}$ the $j$-th diagonal item of $S_2$.

Then the $j$-th diagonal item of $(S_3 - S_1 - S_2) + (S_3 - S_1 - S_2)^\dagger$ is $2(\cos(\theta_{1j} + \theta_{2j}) - \cos(\theta_{1j}) - \cos(\theta_{2j}))$, which is no less than $-3$ as we can verify with direct optimization. With this, we conclude the inequality in Eq.~\eqref{eq:meaninequality}.

Denote $p_j$ the $j$-th diagonal element of the state $\rho$ for the mean value, we have
\begin{align}
|\langle S_3\rangle |^2- |\langle S_1\rangle|^2 -|\langle S_2\rangle |^2 &= \left|\sum\nolimits_j p_j  e^{i\theta_{1j}+i\theta_{2j}}\right|^2-\left|\sum\nolimits_j p_j  e^{i\theta_{1j}}\right|^2-\left|\sum\nolimits_j p_j  e^{i\theta_{2j}}\right|^2 \nonumber\\
&=\sum\nolimits_{i,j} p_i p_j [\cos(\tilde{\theta}_{1ij}+\tilde{\theta}_{2ij}) -\cos(\tilde{\theta}_{1ij})-\cos(\tilde{\theta}_{2ij})]\nonumber\\
&\ge -\sum\nolimits_j p_j^2 -3\sum\nolimits_{i<j} p_i p_j\nonumber\\
&=-\left[3\left(\sum\nolimits_j p_j\right)^2 - \sum\nolimits_j p_j^2\right]/2\nonumber\\
&\ge -(3-1/d)/2,
\end{align}
where $\tilde{\theta}_{1ij} = \theta_{1i}-\theta_{1j}$, $\tilde{\theta}_{2ij} = \theta_{2i}-\theta_{2j}$ and $d$ is dimension. With this, we prove the inequality in Eq.~\eqref{eq:meansuare}.

\end{proof}

\begin{lemma}\label{lm:uncertainty}
For two unitary operator $S_1 S_2 = - e^{i\theta} S_2 S_1$, where $\theta\in (-\pi, \pi]$, 
then we have
\begin{equation}
\label{eq:uncertainty}
    \var(\re S_1)\var(\re S_2) \geq [R\left( \langle \re S_1\rangle \langle \re S_2\rangle-|\sin(\theta/2)|\right)]^2,
\end{equation}
where $R(x) = \max \{0, x\}$.
\end{lemma}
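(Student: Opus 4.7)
The plan is to combine the Robertson form of the uncertainty relation with a direct operator-norm bound on the anticommutator $\{\re S_1,\re S_2\}$. Abbreviating $A'=\re S_1$, $B'=\re S_2$, $a=\langle A'\rangle$, $b=\langle B'\rangle$, I would first use Cauchy--Schwarz applied to the centered Hermitian operators $A'-a$ and $B'-b$ to obtain the standard inequality $\sqrt{\var(A')\var(B')}\geq|\operatorname{Cov}(A',B')|$, where $\operatorname{Cov}(A',B')=\tfrac{1}{2}\langle\{A',B'\}\rangle-ab$ since $A',B'$ are Hermitian. The problem thus reduces to showing $|\operatorname{Cov}(A',B')|\geq R(ab-|\sin(\theta/2)|)$.

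The heart of the argument is the claim $|\langle\{A',B'\}\rangle|\leq 2|\sin(\theta/2)|$. To derive it I would use the four twisted-anticommutation identities that follow from $S_1S_2=-e^{i\theta}S_2S_1$ (one for each choice of daggers) to regroup the eight monomials in $A'B'+B'A'$ into the form
\begin{equation*}
\{A',B'\}=\tfrac{1}{4}\bigl[(1-e^{i\theta})(S_2S_1+S_2^\dagger S_1^\dagger)+(1-e^{-i\theta})(S_2^\dagger S_1+S_2S_1^\dagger)\bigr].
\end{equation*}
Since $|1-e^{\pm i\theta}|=2|\sin(\theta/2)|$ and each bracketed sum of two unitaries has operator norm at most $2$, the triangle inequality yields $\|\{A',B'\}\|\leq 2|\sin(\theta/2)|$, which trivially bounds the expectation value.

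To conclude, if $ab\leq|\sin(\theta/2)|$ then the right-hand side of \eqref{eq:uncertainty} is zero and there is nothing to prove; otherwise $ab>|\sin(\theta/2)|\geq\tfrac{1}{2}|\langle\{A',B'\}\rangle|$, and since $\tfrac{1}{2}\langle\{A',B'\}\rangle$ is real (the anticommutator being Hermitian) the ordinary triangle inequality gives $|\operatorname{Cov}(A',B')|=|ab-\tfrac{1}{2}\langle\{A',B'\}\rangle|\geq ab-|\sin(\theta/2)|$. Squaring and inserting into the Cauchy--Schwarz step yields \eqref{eq:uncertainty}.

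The main obstacle I expect is the algebraic regrouping in the middle step: the naive expansion of $\{A',B'\}$ produces eight non-Hermitian monomials with four different scalar phases, and one must apply the pair relations consistently so that every pair picks up a common factor $(1-e^{\pm i\theta})$, which is precisely what exposes the $\sin(\theta/2)$ scale. A convenient sanity check is that the rearranged expression must be self-adjoint; this is not obvious term by term, but follows from $S_1^\dagger S_2^\dagger=-e^{i\theta}S_2^\dagger S_1^\dagger$, which sends the adjoint of the first bracket back to the first bracket itself (and similarly for the second).
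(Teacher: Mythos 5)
Your proposal is correct and follows essentially the same route as the paper: the identical regrouping of $4\{\re S_1,\re S_2\}$ into terms carrying factors $(1-e^{\pm i\theta})$ to get the bound $2|\sin(\theta/2)|$ on the anticommutator expectation, combined with the covariance form of the uncertainty relation (the paper cites the Schr\"odinger relation and drops the commutator term, which is exactly your Cauchy--Schwarz step), and the same case split on the sign of $\langle\re S_1\rangle\langle\re S_2\rangle-|\sin(\theta/2)|$.
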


\begin{proof}
Note $S_1 S_2 = -e^{i\theta} S_2 S_1$ implies $S_1 S_2^\dagger= -e^{-i\theta} S_2^\dagger S_1 $, $S_1^\dagger S_2 = -e^{-i\theta} S_2 S_1^\dagger$ and $S_1^\dagger S_2^\dagger = -e^{i\theta} S_2^\dagger S_1^\dagger$ by multiplying $S_2^\dagger$ or similar operator both on left and right on both side of the equality. Then we have
\begin{align}
    4\{\re S_1, \re S_2\}=& (S_1+S_1^\dagger)(S_2+S_2^\dagger)+(S_2+S_2^\dagger)(S_1+S_1^\dagger)\nonumber\\
    =&(1-e^{-i\theta})S_1 S_2 +(1-e^{i\theta})S_2^\dagger S_1^\dagger + (1-e^{i\theta}) S_1 S_2^\dagger + (1-e^{-i\theta}) S_2 S_1^\dagger
\end{align}
Therefore,
\begin{equation}\label{eq:upperangle}
    4|\langle \{\re S_1, \re S_2\} \rangle| \leq 2\times |1-e^{-i\theta}|+2\times|1-e^{i\theta}| = 8|\sin(\theta/2)|
\end{equation}
The Schr\"{o}dinger uncertainty relation \cite{QM} is
\begin{align}\label{eq:uncertainty2}
    \var(A_1)\var(A_2)
    \geq \left| \langle A_1\rangle \langle A_2\rangle - \left\langle \{A_1, A_2\}\right\rangle/2  \right|^2 + \left|\left\langle[A_1,A_2]\right\rangle/i\right|^2
\end{align}
By taking $A_i = \re S_i$, for $i=1,2$, we have

\begin{align}
       \var(\re S_1)\var(\re S_2) \geq \left| \langle \re S_1\rangle \langle \re S_2\rangle - \left\langle \{\re S_1, \re S_2\}\right\rangle/2 \right|^2 .
\end{align}
In the case that $\langle \re S_1\rangle \langle \re S_2\rangle > |\sin(\theta/2)|$, Eq.~\eqref{eq:upperangle} implies that
\begin{align}
       \var(\re S_1)\var(\re S_2) \geq \left( \langle \re S_1\rangle \langle \re S_2\rangle-|\sin(\theta/2)|\right)^2.
\end{align}
Besides, in any case, it holds naturally that $\var(\re S_1)\var(\re S_2) \ge 0$.
\end{proof}

\subsection{2. Estimation of the fidelity for GHZ state}

The $d$-dimensional state $\ghz$ has the stabilizer group $\mathcal{S} = \{S_{abc}\}$, where
\begin{equation}\label{eq:stabilizer}
    S_{abc} = Z^aX^c \otimes Z^{(b-a)} X^c \otimes Z^{-b} X^c.
\end{equation}
We can verify that $S_{abc} = S_{a00} S_{0b0} S_{00c}$, and $\tr(S_{abc} S_{xyz}^\dagger) = d^3 \delta_{ax} \delta_{by} \delta_{cz}$.

It is known that, $
   |{\rm GHZ}\rangle \langle {\rm GHZ}| = \sum_{S\in \mathcal{S}}  S /d^3$.
Consequently, for a given state $\rho$, its fidelity $f$ with $\ghz$ can be written as
\begin{equation}\label{eq:fidelityexp}
f = \langle {\rm GHZ}|\rho|{\rm GHZ}\rangle = \frac{1}{d^3} \sum_{S \in \mathcal{S}} \tr( \rho S) = \frac{1}{d^3} \sum_{S \in \mathcal{S}} \re\expct{ S}_\rho, 
\end{equation}
where the last equality is from the fact that $S^\dagger \in \mathcal{S}$ if $S \in \mathcal{S}$. 

To estimate the fidelity $f$, we adopt the inflation process as described in Fig.~\ref{fig:GHZ}, and obtain a list of inequalities:
\begin{align}
 &|\langle S_{abc}\rangle_\rho|^2 +|\langle S_{xy0}\rangle_\eta|^2 \leq 1+ \left|\cos \left({ct\pi}/{d}\right)\right|,\label{eq:phase}\\
 &\var (\re S_{abc}')_\eta \var(\re S_{xy0})_\eta \geq [R(\langle \re S_{abc}'\rangle_\eta \langle \re S_{xy0}\rangle_\eta - \left|\cos \left({ct\pi}/{d}\right)\right| )]^2,\label{eq:cond2}\\
    &|\langle S_{xy0} \rangle_\eta| \geq  \langle \re S_{xy0}\rangle_\eta \geq  \langle \re S_{x'y'0} \rangle_\rho + \langle \re S_{(x-x')(y-y')0}\rangle_\rho -3/2,\label{eq:cond3}\\
    &|\langle S_{xy0}\rangle_\eta|^2 \geq |\langle S_{x'y'0} \rangle_\rho |^2 +|\langle S_{(x-x')(y-y')0} \rangle_\rho|^2 - {3}/{2},\label{eq:3/2}\\
    &|\langle S_{xy0}\rangle_\eta | \geq \re (\langle S_{xy0}\rangle_\eta) \geq 4f-3,\label{eq:3/4}
\end{align}
where $t$ is an arbitrary integer, and we have made use of the marginal relations in the inflation process, $c\neq 0$, $x=y\neq 0$ or only one of $x, y$ is $0$, $x'=x,y'=0$ in the case that $x=y\neq 0$, and $x'=y'$ is the nonzero element of $x, y$ in the case that only one of $x, y$ is $0$. Above we have used the marginal relations as follows:
\begin{align}
    \meane{S_{x00}}_\gamma=\meane{S_{x00}}_\rho,\\
    \meane{S_{0y0}}_\gamma=\meane{S_{0y0}}_\rho,\\
    \meane{S_{xy0}}_\eta=\meane{S_{xy0}}_\gamma,\\
    \meane{S_{abc}'}_\eta=\meane{S_{abc}}_\rho,
\end{align}
where $S_{abc}'$ is supported on $A,B$ and $C'$ similarly. Equation.~\eqref{eq:phase} is from Lemma~\ref{lm:incompatible}, Eq.~\eqref{eq:cond2} is from Lemma~\ref{lm:uncertainty}, Eq.~\eqref{eq:cond3} and Eq.~\eqref{eq:3/2} are from Lemma~\ref{lm:corr_sum}, and Eq.~\eqref{eq:3/4} is from Lemma~\ref{lm:fidelity}.
Besides, $\langle (\re S)^2\rangle \leq 1$ holds for any $S \in \mathcal{S}$. Note that Eqs.~\eqref{eq:cond3}--~\eqref{eq:3/4} can be used to impose additional inequalities among different components $\langle S_{xy0}\rangle$ for $x,y \in \{0,1,\cdots, d-1\}$. This would impose a tighter bound on the fidelity. Note that above all $a,b,c,x,y, e$ ranges from 0 to $d-1$, so there are only finite numbers of constraints.

Under those conditions and the fact in Eq.~\eqref{eq:fidelityexp}, we can find out an upper bound of fidelity $f$. For $d=2 ,3, 4, 5, 6, 7, 8$, the upper bound is $0.893, 0.950, 0.881, 0.925, 0.899, 0.912 , 0.876$, respectively.
We remark that, in the case that the dimension $d$ is an odd prime number, we can derive a simple upper bound by combining Eq.~\eqref{eq:phase} and Eq.~\eqref{eq:3/4}. Those two equations imply
\begin{align}
    &(4f-3)^2+|\expct{S_{abc}}_\rho|^2 \leq 1+ \sin \frac{\pi}{2d},\ c\neq 0,
\end{align}
which leads to
\begin{equation}\label{eq:fidelityprime}
f =\frac{1}{d^3}\sum_i \langle S_i\rangle \leq \frac{1}{d^3}\left[d^2+(d^3-d^2)\sqrt{1+\sin\frac{\pi}{2d}-(4f-3)^2}\right],
\end{equation}
since there are $(d^3-d^2)$ terms in the form $S_{abc}$ with $c\neq 0$, and $|\expct{S_{abc}}| \le 1$ in any case.

\begin{table}[h!]
    \centering
    \begin{tabular}{|c|c|c|c|c|}
    \hline
         $d$ & 2 & 3 & 4 & 5 \\ \hline
         Eq.~\eqref{eq:ghnetfidelity_app} & 0.9 & 0.955 & 0.9 & 0.935 \\ \hline
         Eq.~\eqref{eq:fidelityprime} & 0.9 & 0.951 &  & 0.925\\ \hline
         Numerical & 0.893 & 0.950 & 0.881 & 0.925\\ \hline
    \end{tabular}
    \caption{Upper bound on fidelity with different methods {, see the text for further details. The case $d=4$ with second method is empty since Eq.~\eqref{eq:fidelityprime} works for only prime dimensional case.}}
    \label{table:fidelities}
\end{table}
Now we go to compare the upper bound  in Eq.~\eqref{eq:ghnetfidelity} in main text, the upper bound given by Eq.~\eqref{eq:fidelityprime}, and the one derived from Eq.~(\ref{eq:phase}-\ref{eq:3/4}) for the GHZ states in dimension $d=2,3,4,5$. The result is summarized in Table~\ref{table:fidelities}.  {The case $d=4$ with second method is empty since Eq.~\eqref{eq:fidelityprime} works for prime dimension.} As we can see, those upper bounds are not so far away from each other.

\twocolumngrid

\bibliography{refs.bib}

\clearpage

\end{document}